\documentclass[letterpaper, 10 pt, conference]{ieeeconf}  

\IEEEoverridecommandlockouts                              
\overrideIEEEmargins
\usepackage{graphicx}      
\usepackage{amsmath,amssymb}
\usepackage{setspace}
\usepackage{enumerate}
\usepackage{comment}
\usepackage{curve2e}
\usepackage{multirow}
\usepackage{subfigure}
\usepackage{graphicx}
\usepackage{booktabs} 

\usepackage[T1]{fontenc}
\usepackage[utf8]{inputenc}
\usepackage{multirow}
\usepackage{mathrsfs}
\usepackage{framed}
\usepackage{url}
\newfam\bbfam
\font\tenbb=msbm10
\textfont\bbfam=\tenbb

\newtheorem{thm}{Theorem}[section]

\newtheorem{prop}[thm]{Proposition}

\usepackage{algorithm}
\usepackage{algorithmic}

\usepackage{amssymb}	
\usepackage{cases}
\definecolor{rouge}{rgb}{1,0,0}
\definecolor{bleu}{rgb}{0,0,1}
\definecolor{vert}{rgb}{0,1,0}


\title{\LARGE \bf Design of Software Rejuvenation for CPS Security Using Invariant Sets} 


\author{Raffaele Romagnoli$^\dagger$, Bruce H. Krogh$^\ddagger$ and Bruno Sinopoli$^\dagger$ \\
$^\dagger$Dept.of Electrical and Computer Engineering \\
$^\ddagger$Software Engineering Institute\\
Carnegie Mellon University \\
Pittsburgh, PA USA \\
\{rromagnoli$|$krogh$|$brunos\}@andrew.cmu.edu}

\begin{document}
\maketitle
\thispagestyle{empty}
\pagestyle{empty}

\begin{abstract}
Software rejuvenation has been proposed as a strategy to protect cyber-physical systems (CSPs) against unanticipated and undetectable cyber attacks. The basic idea is to refresh the system periodically with a secure and trusted copy of the online software so as to eliminate all effects of malicious modifications to the run-time code and data. Following each software refresh a safety controller assures the CPS is driven to a safe state before returning to the mission control mode when the CPS is again vulnerable attacks. This paper considers software rejuvenation design from a control-theoretic perspective. Invariant sets for the Lyapunov function for the safety controller are used to derive bounds on the time that the CPS can operate in mission control mode before the software must be refreshed and the maximum time the safety controller will require to bring the CPS to a safe operating state. With these results it can be guaranteed that the CPS will remain safe under cyber attacks against the run-time system and will be able to execute missions successfully if the attacks are not persistent. The general approach is illustrated using simulation of the nonlinear dynamics of a quadrotor system. The concluding section discusses directions for further research.   
\end{abstract}

\section{INTRODUCTION}
Cyber security has become a significant research and development issue for control of cyber-physical systems (CPSs), particularly for safety critical applications \cite{cardenas2008secure,slay2007lessons,case2016analysis,langner2011stuxnet}. The primary goal is to prevent physical damage to the system or environment due to malicious cyber attacks. Standard approaches to security rely on effective attack models. But attacks can be perpetrated in many different ways, and it is impossible to model all possible attacks.  Moreover, even the best detection mechanisms are impotent if an attacker is able to modify the run-time software.

Software rejuvenation, an established method for dealing with so-called software aging in traditional computing systems \cite{HKKF95}, has been proposed recently to deal with unmodeled and  undetectable cyber attacks on CPSs \cite{ASW17,arroyo2017fired}.   The idea is to periodically refresh the run-time system completely with a trusted, secure copy of the control software to thwart attacks that may have changed the on-line code.  Although the basic concept has been implemented for some demonstration systems, timing bounds for the software rejuvenation schedule need to be obtained  to assure that the software is refreshed before any damage has been done from malicious software and that the CPS will remain safe and viable after restarting the control software.  This paper derives these bounds using invariant sets for the case of controlling a CPS at an equilibrium state as a first step towards a sound theory of guaranteed security for general CPS using software rejuvenation.

\section{PREVIOUS WORK}
The concept of software rejuvenation was introduced by Huang et al. in 1995 \cite{HKKF95} to address the problem of so-called software aging; that is, failures that occur when a running program encounters a state that was not anticipated when the software was designed.  The basic idea is to restart the software intermittently at a "clean" state, either through complete system reboot or by returning to a recent checkpoint, with the hope that this will prolong the time until unanticipated states occur that might cause failures. Since the introduction of the concept, there has been considerable research into the development and performance of software rejuvenation strategies \cite{CNPR14}, and it has become a practical tool for enhancing the robustness of many computing systems \cite{ABL+12}.

Although software aging remains the primary motivation for implementing software rejuvenation strategies in computing systems, a few papers have proposed software rejuvenation to enhance system security \cite{AP04,LKGK14}. In contrast to software aging where mean-time to failure can be the basis for timing software refresh, the frequency of software refresh to defend against malicious attacks must be determined by the length of time a system can remain viable once its security has been violated. 

Arroyo et al. \cite{ASW17,arroyo2017fired} propose software rejuvenation as a strategy for CPS security and demonstrated the concept for a quadrotor controller and an automotive engine controller. These examples illustrate how refreshing software in a CPS impacts performance and introduces timing  constraints and safety considerations that aren't present in traditional computing systems.  

Abidi et al. \cite{Abdi2018} develop software rejuvenation for CPS further by introducing three concepts.  First, the \textit{hardware root of trust} is a secure onboard module that hosts the capabilities that must be available without compromise to implement software rejuvenation. Second, the \textit{secure execution interval} (SEI) is a period during which all external communication is disabled so that no cyber attacks can occur as the software is refreshed.  Third, the \textit{safety controller}, which executes immediately following a software refresh and during the SEI, drives the CPS to a known safe state before restoring communication and returning the system to mission control with vulnerability to attacks.  Abidi et al. use a simple, conservative reachability algorithm to determine the time that can be allowed before the next software reset and illustrate their approach for a simulated warehouse temperature controller and a bench-top 3-DOF helicopter.

In this paper, we adopt the overall approach to software rejuvenation from \cite{Abdi2018}, but modify the timing strategy based on analysis of the control problem.  We formulate the software rejuvenation problem for CPS applications where the safety controller is designed using the linearized dynamics to take the system to a neighborhood of a given equilibrium state.  Using invariant sets for the safety controller and system operating constraints, we derive bounds on the timing parameters for software rejuvenation strategy. We also introduce the concept of adding control constraints enforced by the hardware root of trust to increase the time allowed for mission control and software reset. We demonstrate our contributions using simulation of the nonlinear dynamics for a 6-DOF quadrotor. 

\section{PROBLEM FORMULATION}\label{sec:formulation}
We use the following notation. For a vector $x \in \mathbb{R}^n$ and a symmetric $n \times n$ matrix $Q$,  $||x||_Q^2 = x^TQx$. For a given dynamic system, the set of states reachable at time $t>0$ from an initial state $x(0)=x_0$ under control policy $CP$ is denoted by $\mathcal{R}(x_0,t;CP)$, with the natural extension to a set of initial states.  When the control policy is clear from the context, the reachable set will be denoted as $\mathcal{R}(x_0,t;CP)$.  For a matrix $M$ with real eigenvalues, $\lambda_{min}(M)$ denotes the minimum eigenvalue of $M$.

\subsection{Software rejuvenation algorithm}
 Figure \ref{fig:TimingDiagram} illustrates the three operating modes for a CPS with software rejuvenation: \textit{mission control} (MC), where the CPS is executing its intended mission with network communication to other agents in the overall system, including possibly a supervisor; \textit{software refresh} (SR), when a secure copy of the operating software is reloaded to eliminate any possible corruption of the run-time code and data; and \textit{safety control} (SC), which is described below. The period for each of these operating modes is denoted $T_{MC}$, $T_{SR}$ and $T_{SC}$, respectively.  

\begin{figure}[h!]
\centering
\includegraphics[scale=.34,trim=125 155 12 150,clip]{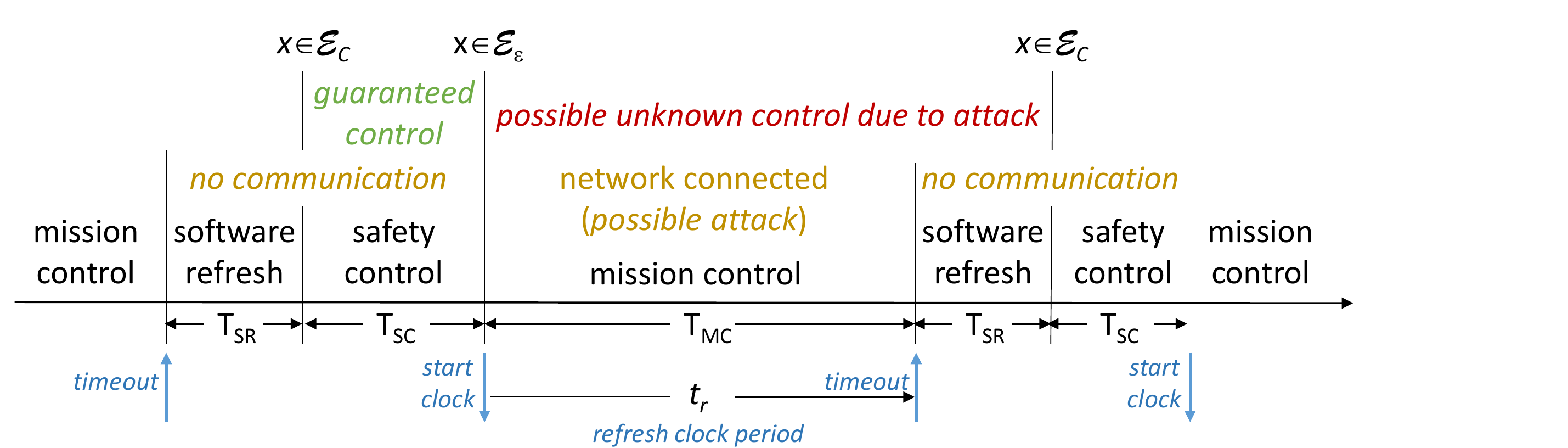} 
\caption{Software rejuvenation control modes.}\label{fig:TimingDiagram}
\end{figure}

Mission control is terminated by the timeout of the \textit{refresh clock}. To assure software refresh and safety control executed without interference from potential attackers, external communication is shutoff and the CPS operates autonomously during the SEI ($T_{SR}+T_{SC}$). After execution of the safety control, the refresh clock is restarted  with the refresh clock period $t_r$ (which determines $T_{MC}$, communication is re-established and th CPS returns to mission control.  Attacks can occur when there is communication. If there is an attack, the control actions may not be known to the CPS controller.  This uncertainty about the control actions (and the state of the CPS) remains through the software refresh. 

The pseudo code in Alg. \ref{alg:SoftwareRejuvenation} illustrates the implementation of the overall software rejuvenation strategy. The protected secure hardware hosts: system communication; trusted software image; refresh clock; and control limits that can be imposed during MC and software SR to constrain what an adversary could do during this period of uncertain control. Since the system is operating securely during SC, these limits can be removed for SC so that it can use the full control capability.

\begin{algorithm}
\caption{Software Rejuvenation Algorithm}\label{alg:SoftwareRejuvenation}
\begin{algorithmic}[1]
\STATE Initialize: 
\STATE \hspace{0.2cm} COMMUNICATION OFF;  (protected)
\STATE \hspace{0.2cm} LOAD SOFTWARE; (protected)
\STATE Step 1:
\STATE \hspace{0.2cm} OFF $\rightarrow$ CTRL\_LIMS; (protected)
\STATE \hspace{0.2cm} WHILE $x \not \in \mathcal{E}_{\epsilon}$
\STATE \hspace{0.6cm} SAFETY\_CONTROL;
\STATE \hspace{0.2cm} END
\STATE \hspace{0.2cm} ON $\rightarrow$ CTRL\_LIMS; (protected)
\STATE \hspace{0.2cm} $t_r$ $\rightarrow$ REFRESH\_CLK; (protected)
\STATE \hspace{0.2cm} COMMUNICATION ON; (protected)
\STATE Step 2:
\STATE \hspace{0.2cm} UNTIL  timeout $\leftarrow$ REFRESH\_CLK (protected)
\STATE \hspace{0.6cm} MISSION\_CONTROL;
\STATE \hspace{0.2cm} END
\STATE Step 3:
\STATE \hspace{0.2cm} COMMUNICATION OFF; (protected)
\STATE \hspace{0.2cm} REFRESH SOFTWARE; (protected)
\STATE \hspace*{0.2cm} GO TO Step 1;
\end{algorithmic}
\end{algorithm}

\subsection{Guaranteeing safety}

To guarantee unknown control actions do not drive the CPS into unsafe or undesired operating states, certain guarantees need to be enforced about the possible states reached during the period of \textit{uncertain control} (UC), $T_{UC}=T_{MC}+T_{SR}$. This is the period when an attacker may have taken over control of the system. The conditions that need to be satisfied are represented by two sets of states, the \textit{safe set} and the \textit{inner safe set}, denoted $\mathcal{E}_{\mathcal{C}}$ and $\mathcal{E}_{\epsilon}$, respectively. Figure \ref{fig:SafeSets} illustrates the role of these set. The safe set represents general operating constraints to assure system safety and controllability. The goal of the safety controller is to assure the system remains in $\mathcal{E}_{\mathcal{C}}$ by always returning the state to the inner-safe set before handing control back to the mission controller.  During the period $T_{UC}$, the state may leave $\mathcal{E}_{\epsilon}$, but safety control must be executed before the state leaves $\mathcal{E}_{\mathcal{C}}$.  
\begin{figure}[h!]
\centering
\includegraphics[scale=1]{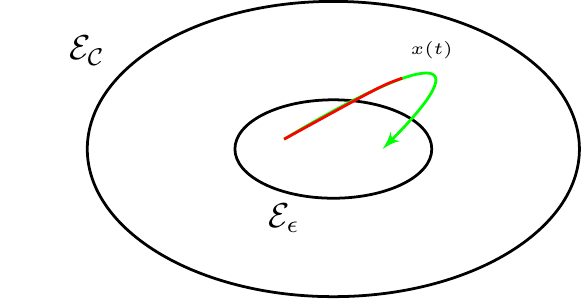} 
\caption{State constraints for software rejuvenation. Red: uncertain control; green: safety control.}
\label{fig:SafeSets}
\end{figure}




General conditions that guarantee the software rejuvenation algorithm will keep the system in $\mathcal{E}_{\mathcal{C}}$ are summarized by the following proposition \cite{Aub09}.


\begin{prop} \label{prop1}
Given a CPS with time-invariant dynamics and a set of safe states $\mathcal{E}_{\mathcal{C}}$, a set of inner safe states $\mathcal{E}_{\epsilon} \subset \mathcal{E}_{\mathcal{C}}$, software refresh time $T_{SR}$ and a safety controller $SC$, if
\begin{enumerate}[i.]
    \item $x(0) \in \mathcal{E}_{\mathcal{C}}$;
    \item $\exists \ \overline{T}_{SC} > 0 \ \ni \ \forall \ x \in \mathcal{E}_{\mathcal{C}}, \ \mathcal{R}(x,\overline{T}_{SC};SC) \subseteq \mathcal{E}_{\epsilon}$; and
    \item $\exists \ T_{UC} > T_{SR} \ni \ \forall \ 0 \leq t \leq T_{UC}, \ \mathcal{R}(\mathcal{E}_{\epsilon},t;UC) \subseteq \mathcal{E}_{\mathcal{C}}$;
\end{enumerate}
then for the CPS controlled by the software rejuvenation algorithm with $t_r = T_{UC}-T_{SR},  \ x(t) \in \mathcal{E}_{\mathcal{C}} \ \forall \ t \geq 0$.
\vspace{0.1in}
\end{prop}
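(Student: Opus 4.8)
The plan is to induct over the successive rejuvenation cycles, tracking the state only at the instants at which control is handed from one mode to the next, and then to use conditions~(ii) and~(iii) to bridge the continuous evolution in between. Time-invariance of the dynamics is what makes this legitimate: the reachability statements in~(ii) and~(iii), which are phrased for trajectories issued at time $0$, apply verbatim to any phase regardless of when it starts. After the (instantaneous) initialization the algorithm repeats the pattern SC $\to$ MC $\to$ SR indefinitely, and the MC and SR phases together constitute exactly the uncertain-control period; with the refresh clock set to $t_r=T_{UC}-T_{SR}$ we have $T_{MC}=t_r$, so this period lasts $T_{MC}+T_{SR}=T_{UC}$ (and $T_{UC}>T_{SR}$ guarantees $t_r>0$).

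First I would set $s_1=0$ and let $s_k$ denote the start of the $k$-th safety-control phase, and prove by induction on $k$ that $x(s_k)\in\mathcal{E}_{\mathcal{C}}$. The base case $k=1$ is hypothesis~(i). For the step, assume $x(s_k)\in\mathcal{E}_{\mathcal{C}}$: during the $k$-th SC phase the WHILE loop runs $SC$, and by~(ii), $\mathcal{R}(x(s_k),\overline{T}_{SC};SC)\subseteq\mathcal{E}_{\epsilon}$, so the state is in $\mathcal{E}_{\epsilon}$ no later than time $s_k+\overline{T}_{SC}$; hence the loop terminates at some instant $s_k+\tau_k$ with $\tau_k\le\overline{T}_{SC}$ and $x(s_k+\tau_k)\in\mathcal{E}_{\epsilon}$. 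The uncertain-control phase then runs from $x(s_k+\tau_k)\in\mathcal{E}_{\epsilon}$ for $T_{UC}$ units of time; since $\{x(s_k+\tau_k)\}\subseteq\mathcal{E}_{\epsilon}$, monotonicity of reachable sets together with~(iii) give $x(t)\in\mathcal{R}(\mathcal{E}_{\epsilon},t-(s_k+\tau_k);UC)\subseteq\mathcal{E}_{\mathcal{C}}$ for every $t$ in this phase, and in particular $x(s_{k+1})\in\mathcal{E}_{\mathcal{C}}$ with $s_{k+1}=s_k+\tau_k+T_{UC}$, closing the induction.

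It then remains to cover every $t\ge 0$, not merely the handoff instants. Such a $t$ lies in some uncertain-control sub-interval $[s_k+\tau_k,s_{k+1}]$, where $x(t)\in\mathcal{E}_{\mathcal{C}}$ was just shown, or in some safety-control sub-interval $[s_k,s_k+\tau_k]$. The latter case is the one I expect to be the main obstacle: conditions~(i)--(iii) pin down the state only at the \emph{end} of the SC transient, not throughout it. I would close this gap by invoking forward invariance of $\mathcal{E}_{\mathcal{C}}$ under $SC$---precisely the property the paper secures by taking $\mathcal{E}_{\mathcal{C}}$ and $\mathcal{E}_{\epsilon}$ to be sublevel sets of a common Lyapunov function for the safety controller, which also makes the WHILE loop's termination within $\overline{T}_{SC}$ transparent---so that a trajectory starting in $\mathcal{E}_{\mathcal{C}}$ cannot leave it while $SC$ is active; equivalently, one strengthens~(ii) to require $\mathcal{R}(x,t;SC)\subseteq\mathcal{E}_{\mathcal{C}}$ for all $t\in[0,\overline{T}_{SC}]$ rather than only at the right endpoint. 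With that in hand, $x(t)\in\mathcal{E}_{\mathcal{C}}$ on every sub-interval, hence for all $t\ge 0$, which completes the argument.
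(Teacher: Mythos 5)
Your argument is sound, and it is worth noting that the paper itself never proves Proposition \ref{prop1}: it simply states it as a summary of standard invariance/viability reasoning with a citation to the literature, followed by an informal discussion of what each condition means. Your induction over rejuvenation cycles -- tracking the state at the handoff instants, using (ii) to bound the SC transient and land in $\mathcal{E}_{\epsilon}$, then using (iii) together with time-invariance and monotonicity of reachable sets to cover the whole uncertain-control interval of length $T_{MC}+T_{SR}=T_{UC}$ -- is exactly the argument the paper leaves implicit, and you correctly account for the bookkeeping $t_r=T_{UC}-T_{SR}$ and for the possibility that the WHILE loop exits early (or immediately, when the state is already in $\mathcal{E}_{\epsilon}$). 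The one genuine issue you raise is also real: as literally stated, condition (ii) constrains the reach set only at the single time $\overline{T}_{SC}$, so nothing in (i)--(iii) alone forbids the SC trajectory from excursions outside $\mathcal{E}_{\mathcal{C}}$ before it first enters $\mathcal{E}_{\epsilon}$. Your two proposed fixes are both appropriate, and the second is what the paper's construction delivers in Section \ref{sec:safety_controller}: there $\mathcal{E}_{\mathcal{C}}$ and $\mathcal{E}_{\epsilon}$ are sublevel sets of the same Lyapunov function $V(x)=\|x\|_P^2$ for the closed-loop safety dynamics, so $\mathcal{E}_{\mathcal{C}}$ is positively invariant under $SC$ and the trajectory cannot leave it during the transient, while the exponential decay bound \eqref{Lyap_bound} supplies the finite $\overline{T}_{SC}$ of condition (ii). So your proof is correct once that invariance property (or the strengthened form of (ii) requiring $\mathcal{R}(x,t;SC)\subseteq\mathcal{E}_{\mathcal{C}}$ for all $t\in[0,\overline{T}_{SC}]$) is added, and it makes explicit a hypothesis that the abstract statement of the proposition glosses over.
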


Proposition \ref{prop1} provides conditions for designing safe software rejuvenation strategies. Condition (i) requires that the system starts at a safe state.  Condition (ii) indicates the safety controller must be able to drive the system from all safe states to the inner safe set in bounded time. The amount of time SC executes depends on how long it takes to drive the state to $\mathcal{E}_{\epsilon}$ (Alg. \ref{alg:SoftwareRejuvenation} line 6). Condition (iii) indicates that $T_{UC}$ must be sufficient for software refresh and small enough that no admissible control can take the system state from the inner safe set to an unsafe state before safety control is initiated.  

The software refresh time $T_{SR}$ is determined by the system hardware and the size of the software image. $\mathcal{E}_{\mathcal{C}}$ should be made as large as possible to maximize the time allowed for mission control.  For a given SC, design of the software rejuvenation algorithm depends on the size of $\mathcal{E}_{\epsilon}$ and the control constraints imposed during $T_{MC}$ and $T_{SR}$.   $T_{UC}$ can be increased by reducing the size of $\mathcal{E}_{\epsilon}$ and by making the control limits more conservative (Alg. \ref{alg:SoftwareRejuvenation} line 9). But a smaller $\mathcal{E}_{\epsilon}$ can lead to longer times for SC and tighter control limits reduce the control available for MC, so there are clear design trade offs to be considered.  

The remainder of the paper focuses on methods for satisfying the conditions in Prop. \ref{prop1} for the case of controlling a CPS in the neighborhood of an equilibrium point.  We assume a supervisor and mission controller always keep the CPS in the neighborhood of some equilibrium point, which may be changing as a mission is executed. What's important is that the current equilibrium $x_{e}$ is known and the state of the system is always with $\mathcal{E}_{\mathcal{C}}$ of the equilibrium state. Section \ref{sec:safety_controller} shows how to find a maximal ellipsoidal safe set $\mathcal{E}_{\mathcal{C}}$ for set of operating constraints defined by affine constraints in the state space $\mathcal{E}_{\mathcal{C}}$, and how to define a corresponding class of inner safe states $\mathcal{E}_{\epsilon}$ such that for any stabilizing safety controller $\overline{T}_{SC}$ can be found to satisfy condition (ii). Section \ref{sec:reset time} shows how to use reachabilty computations to find a $T_{UC}$ that satisfies condition (iii).

\section{SAFETY CONTROLLER}\label{sec:safety_controller}
The dynamics for the CPS linearized at the equilibrium state $x_e$ are given by the LTI state equations
\begin{eqnarray}
\dot{x}(t)&=& A x(t)+Bu(t)\quad t \in \mathbb{R}^+, \label{sys}
\end{eqnarray}
where $x \in \mathbb{R}^n$ and $u \in \mathbb{R}^m$. Assuming the state is  measurable and the system is controllable, the safety controller is designed as a state feedback control law of the form
\begin{equation}
u(t)=-K x(t) \label{u}
\end{equation}
where $K$ is a gain matrix chosen to make the closed-loop system matrix $A_{SC}\triangleq(A-BK)$ Hurwitz. Thus, under safety control the system dynamics are stable and given by 
\begin{equation}
\dot{x}(t)=A_{SC}x(t). \label{cloop}
\end{equation}

We assume the safe states can be represented by a convex polyhedral region $\mathcal{C} \subset \mathbb{R}^n$ of the state space given by
\begin{equation}
\mathcal{C}=\left\lbrace x\; |\;  \xi_j^T x\leq 1,\; j=1,...,n_c \right\rbrace \label{C1}
\end{equation}
where $n_c$ represents the number of constraints representing the operating constraints and the region about the equilibrium where \eqref{sys} and \eqref{cloop} hold with the control \eqref{u} adhering to the control limits $u \in \mathcal{U}$.

Let us define the \textit{safe set} $\mathcal{E}_{\mathcal{C}}$ as the largest positively invariant ellipsoid for \eqref{cloop} contained in $\mathcal{C}$.  This largest positively invariant ellipsoid $\mathcal{E}_{\mathcal{C}}$ can be found solving the following maximization problem \cite{boyd1994linear}:
\begin{equation}\label{safeinv}
\left\lbrace\begin{array}{l}
\mathrm{max\ log\ det} Q\\
s.t.\ Q A^{T}_{SC} + A_{SC}Q\leq 0\\
\qquad\!\! \xi_j^T Q \xi_j\leq 1,\; j=1,...,n_c\\
\qquad\!\! Q>0
\end{array} \right.
\end{equation}
Given a solution $Q$ to \eqref{safeinv}, define $P=Q^{-1}$.  Assuming no control saturation, $V(x)=||x||_P^2$ is a Lyapunov function for \eqref{cloop} with
\begin{equation}
\dot V(x) = A_{SC}^TP+ P A_{SC} < 0. 
\end{equation}
The positively safe invariant set is given by
\begin{eqnarray}\label{E_C}
 \mathcal{E}_{\mathcal{C}}=\left\lbrace x\;|\; ||x||_P^2 \leq 1 \right\rbrace.
\end{eqnarray} 
Note that $\mathcal{E}_{\mathcal{C}}$ is also defined by the set of states satisfying $V(x) \leq 1$.

Given $0 < \epsilon < 1$, we define the invariant ellipsoid
\begin{equation}\label{E_epsilon}
\mathcal{E}_\epsilon \triangleq \left\lbrace x\;|\; V(x) = ||x||_P^2 \leq \epsilon \ \right\rbrace = \epsilon\mathcal{E}_{\mathcal{C}}.
\end{equation}
We want to show that there exists a $\overline{T}_{SC}$ for which condition (ii) of Prop. \ref{prop1} is satisfied.

For any state trajectory $x(t)$ of \eqref{cloop} starting at a given initial condition $x(0)=x_0$, it can be shown that \cite{kalman1960control}
\begin{equation}
    V(x(t))\leq e^{-\gamma(t)}V(x(0)),\; \forall \ t\geq 0, \label{Lyap_bound}
\end{equation}
with
\begin{equation}
 \gamma= \min_{x} \frac{-\dot{V(x)}}{V(x)} = \lambda_{min}(WP^{-1}), \label{gamma}  
\end{equation}
where $W\triangleq -A_{SC}^TP-PA_{SC}$.  Therefore, \eqref{Lyap_bound} implies a bound on the maximum time it can take for the safety controller to drive the state of \eqref{cloop} to $x(t) \in \mathcal{E}_{\epsilon}$ (for which $V(x(t)) \leq \epsilon$) from any initial state $x_0 \in \mathcal{E}_{\mathcal{C}}$ (for which $V(x_0) \leq 1$) is given by
\begin{equation}
    \overline{T}_{SC} = \frac{-\mathrm{ln}(\epsilon)}{\gamma}. \label{T_max}
\end{equation}
We note that in general \eqref{T_max} is a very conservative bound on the worst-case value of $T_{SC}$.

\section{UNCERTAIN CONTROL PERIOD}\label{sec:reset time}
This section presents a procedure to find a value of the uncertain control period $T_{UC}$ that satisfies condition (iii) of Prop. \ref{prop1}. Two conditions must be satisfied:
\begin{equation}
T_{UC}>T_{SR} \label{TUCgtTSR}
\end{equation}
and
\begin{equation}
    \forall \ 0 \leq t \leq T_{UC}, \ \mathcal{R}(\mathcal{E}_{\epsilon},t;UC) \subseteq \mathcal{E}_{\mathcal{C}}, \label{TUCReach}
\end{equation}
where for \eqref{sys}
\begin{eqnarray*}
    &\hspace{-1.2cm} \mathcal{R}(\mathcal{E}_{\epsilon},t;UC)=  \{x| e^{At}x_0 + \int_0^t e^{A(t-\tau)}Bu(\tau)d\tau,  \\
 &\hspace{3.0cm}x_0 \in \mathcal{E}_{\epsilon}, \ u(\tau) \in \mathcal{U}, \ 0 \leq \tau \leq t \}.  
\end{eqnarray*}

We first consider condition \eqref{TUCReach}. Following the procedure from \cite{hwang2005polytopic}, the maximum principle can be applied to overapproximate the reach set at each point in time with a set of supporting hyperplanes. We apply this procedure using a polyhedral representation of $\mathcal{U}$ and a polytope $\mathcal{P}$ that contains $\mathcal{E}_{\epsilon}$ as the approximation of $\mathcal{R}(\mathcal{E}_{\epsilon},0;UC)$ and continue computing  $\mathcal{R}(\mathcal{P},t;UC)$ which contains $\mathcal{R}(\mathcal{E}_{\epsilon},t;UC)$ for increasing values of $t$ until the overapproximation is not contained in $\mathcal{E}_{\mathcal{C}}$. The largest value of $t$ for which $\mathcal{R}(\mathcal{P},t;UC) \subset \mathcal{E}_{\mathcal{C}}$ provides a value for $T_{UC}$.

Entering into the details of the method of the computation of the reach set described in \cite{hwang2005polytopic}, let us start from the polytopic approximation of $\mathcal{E}_\epsilon$ \begin{eqnarray}
    \mathcal{P}=\left\lbrace x\;|\; \alpha_i^T x \leq 1,\;  i=1,...,n_c \right\rbrace \supset \mathcal{E}_\epsilon, 
\end{eqnarray}
where $n_c$ represents the number of constraints.
The supporting hyperplanes of the reach set $\mathcal{R}(\mathcal{P},t;UC)$ are
\begin{equation}
    v^+_i(x,t)=\alpha_i(t)x
\end{equation}
where $\alpha_i(t)$ are the normal directions computed as
\begin{equation}
    \alpha_i(t)=e^{-A^T\tau}\alpha_i. \label{normal_dir}
\end{equation}
Expression \eqref{normal_dir} is obtained applying the Pontryagin maximum principle of optimal control theory \cite{varaiya2000reach}.
Then, defining $\left\lbrace u^1,u^2,...,u^{n_v} \right\rbrace$ the $n_v$ vertexes that describise $\mathcal{U}$,   an over approximation of the reachable set $\mathcal{R}(\mathcal{P},t;UC)$ is 
\begin{eqnarray}
    &\hspace{-1cm}\mathcal{R}^+(\mathcal{P},t;UC)= \bigcap_{i=1}^{n_c}\left\lbrace x|v^+_i(x, t) \leq \right. \nonumber \\
    &\hspace{2.5cm}\left. \int_0^{ t} \max_j\;\langle\alpha_i(\tau),Bu_j \rangle d\tau + \right.\nonumber \\
    &\hspace{3.8cm}\left. \max_{x(0)\in \mathcal{P}} v^+_i(x(0),0) \right\rbrace. \label{V+1} 
\end{eqnarray}
Those bounds are generated considering the supporting hyperplanes $v_i(x,t)$ as a solution of the Hamilton-Jacobi-Isaacs partial differential equation  \cite{kurzhanski2001dynamic}.

Since $\mathcal{R}^+(\mathcal{P},t;UC)$ can be also represented in terms of $n_r$ vertexes defined as $x^1,x^2,...,x^{n_r}$, 

\begin{equation}
    \mathcal{R}^+(\mathcal{P}, t; UC) \subseteq \mathcal{E}_\mathcal{C} \iff x^{i^T}P x^i \leq 1, \label{cond1}
\end{equation}
for $i=1,...,n_r$.
The largest value of $t$ such that $\mathcal{R}^+(\mathcal{P},t;UC) \subset \mathcal{E}_{\mathcal{C}}$, is the largest value that satisfies \eqref{cond1}.

If the computed value of $T_{UC}\leq T_{SR}$, condition \eqref{TUCgtTSR} has not been satisfied and two strategies can be used to try to increase the value $T_{UC}$: (a) $\epsilon$ can be reduced; and (b) the constraints on the control can be reduced.  Strategy (a) increases the distance between the boundary of  $\mathcal{E}_\epsilon$ and the boundary of $\mathcal{E}_\mathcal{C}$, meaning that it will take longer to drive the system out of the safe set. This results in a smaller inner safe set, however, which will increase the bound on the worst-case execution time for the safety controller.  Strategy (b) will decrease the rate at which an attacker can drive the system out of set of safe states. But the same control limits will apply for the mission controller, so increased security will be achieved at the expense of reducing system performance.  We illustrate these strategies to find a feasible value for $T_{UC}$ in the following section.

\section{EXAMPLE}\label{sec:example}
The proposed method is tested on a quadrotor system, which represents a challenge for secure control scheme based on software rejuvenation \cite{arroyo2017fired} \cite{Abdi2018}. The considered quadrotor is the "Generic 10" Quad + geometry" used  in the PX4 flight controller platform with full thrust of 4 N, and full torque of 0.05 Nm for each motor \cite{px4}. The quadrotor is described by a nonlinear mathematical model with 12 state variables, 3 for the position in the inertial frame of reference, 3 for the orientation of the airframe (roll, pitch, yaw), 3 for the positional velocities, and 3 for the angular velocities \cite{beard2008quadrotor}. The control variables are the force normal to the air frame and the torques about the three angles of orientation.  A matrix multiplication (the "mixer") maps these control variables to the motor torques. 

In general, the controller is designed considering a linearized model of the system around an equilibrium point \cite{bolandi2013attitude} \cite{bouabdallah2004pid} \cite{araar2014full}. Without loss of generality that equilibrium point can be considered equal to origin of the state space.  It is important to note that the controller performance depends on how close the system is operating to the equilibrium point. Moving the system away from that equilibrium point the controller could not work in a proper way. In this scenario, it can be easy for an attacker to pursue physical damage to the system.

The controller used to stabilize the linearized system is an LQR controller with integral action \cite{araar2014full}. The goal is to keep the quadrotor hovering at the equilibrium point where the system has been linearized.

For implementing the proposed software rejuvenation scheme, the safe invariant ellipsoid  $\mathcal{E}_\mathcal{C}$ is computed solving the maximization problem \eqref{safeinv} based on the set $\mathcal{C}$ that in this case is represented by a set of constraints on each of the 12 state variables. For the position, the vertical direction is bounded between $\pm$ 5 m and the other two directions between $\pm$ 2 m. The vertical velocity is bounded by $\pm$ 5 m/s, and the horizontal velocity limits are $\pm$ 2 m/s. The bounds on the angles of roll pitch and yaw are fundamental for good performance of the linear controller. In this case all the angles are bounded between $\pm\pi/4$. The angular velocities are between $\pm$ 5 rad/s. 

Following the procedure described in Section V, a feasible value of $T_{UC}$ is $0.37$ s for $\epsilon=0.01$, with $T_{SR}=0.1$s. The value of $T_{UC}$ has been computed considering the following reduced bounds on the torques, $\pm$ 0.0033 Nm for roll and pitch, and $\pm$ 0.0005 Nm for the yaw. It is important to remark that, those bounds are not imposed during the SC mode, where only the less restrictive constraints are considered, namely 0.05 Nm \cite{px4}.

Figure \ref{fig:invreach} shows shows projections of the ellipsoids $\mathcal{E}_{\mathcal{C}}$ (light blue) and $\mathcal{E}_{\epsilon}$ (light green) on the space of the positions (left) and angles (right). The figure also shows the projections of the reach set $\mathcal{R}^+(\mathcal{P}_\epsilon,T_{UC})$ (blue polytope) on the sub-spaces. From the figure it is possible to observe that all vertexes are completely contained in the projections of the invariant ellipsoid $\mathcal{E}_{\mathcal{C}}$.

\begin{figure}[h!]
	\centering
	\begin{subfigure}{} 
		\includegraphics[scale=0.65]{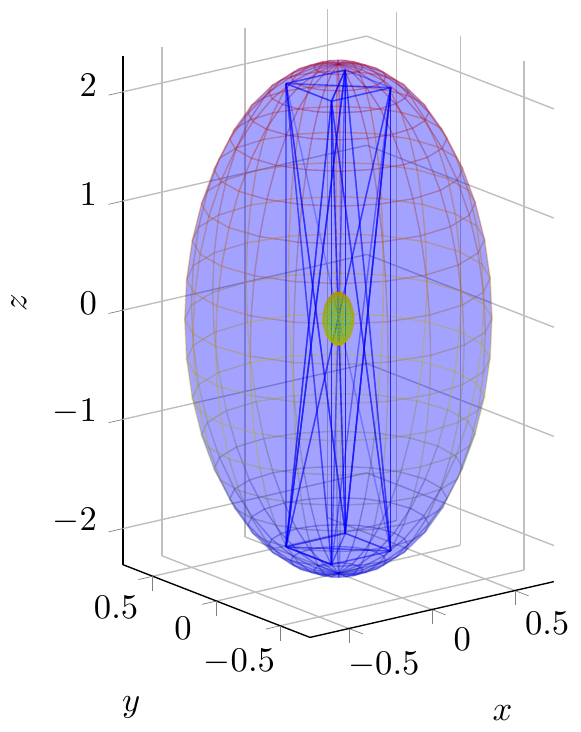}
	\end{subfigure}
	\begin{subfigure}{} 
		\includegraphics[scale=0.65]{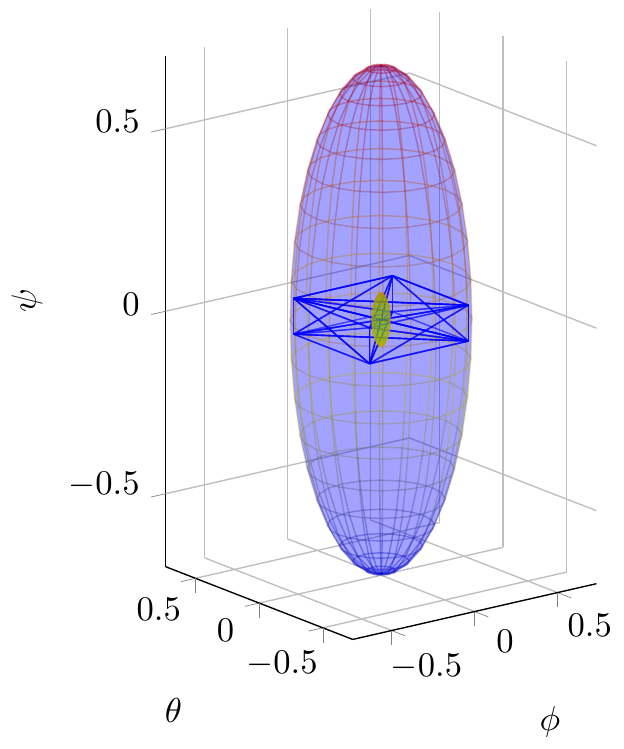}
	\end{subfigure}
	\caption{Projections of $\mathcal{E}_{\mathcal{C}}$(light blue), $\mathcal{E}_{\epsilon}$ (light green) and $\mathcal{R}^+$ computed for $\epsilon=0.01$ and $T_{UC}=0.37s$ (blue line) on the position (left), and angles(right).} 
	\label{fig:invreach}
\end{figure}

Using the values $T_{UC}=0.37$ s and $\epsilon=0.01$ the algorithm of software rejuvenation has been tested on the nonlinear model of the quadrotor controlled by the LQR controller developed using the linearized model of the system. Several tests have been carried out simulating attacks that provide arbitrary control inputs. In all the cases the reset clock $t_r$ and safety controller keep the state of the system from hitting the boundary of $\mathcal{E}_{\mathcal{C}}$. 

Figure \ref{fig:sim} shows the position behaviour of the quadrotor respect the invariant ellipsoids, and  Fig. \ref{fig:sim1} shows the zoom of the behaviour respect the several modes of the controller. Figure \ref{fig:timeline} shows the time-line for the scheduling of the software rejuvenation modes in a scenario with two attacks. In the first case the attacker turns off all the propellers, and in the second the case the attacker tries to drive the system to a different  equilibrium point.

\begin{figure}[h!]
    \centering
    \includegraphics[scale=0.8]{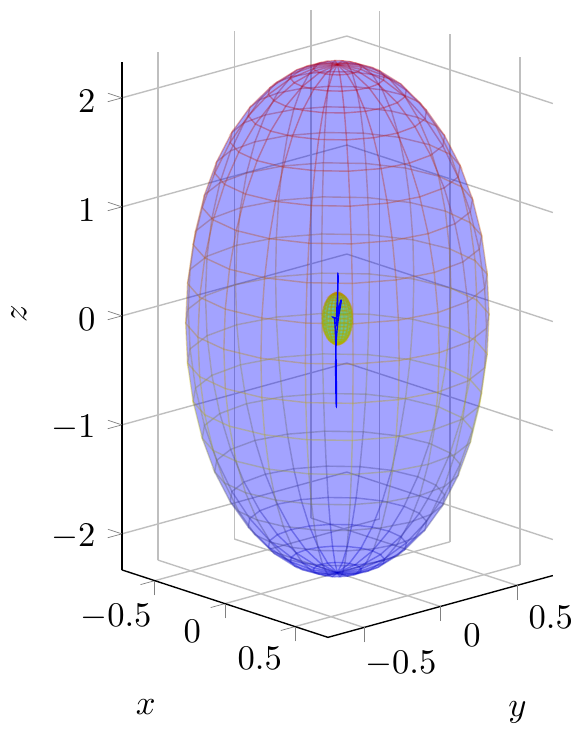}
    \caption{Quadrotor position behaviour respect to $\mathcal{E}_{\mathcal{C}}$(light blue) and $\mathcal{E}_{\epsilon}$ (light green).} 
    \label{fig:sim}
\end{figure}

\begin{figure}[h!]
    \centering
    \includegraphics[scale=0.6]{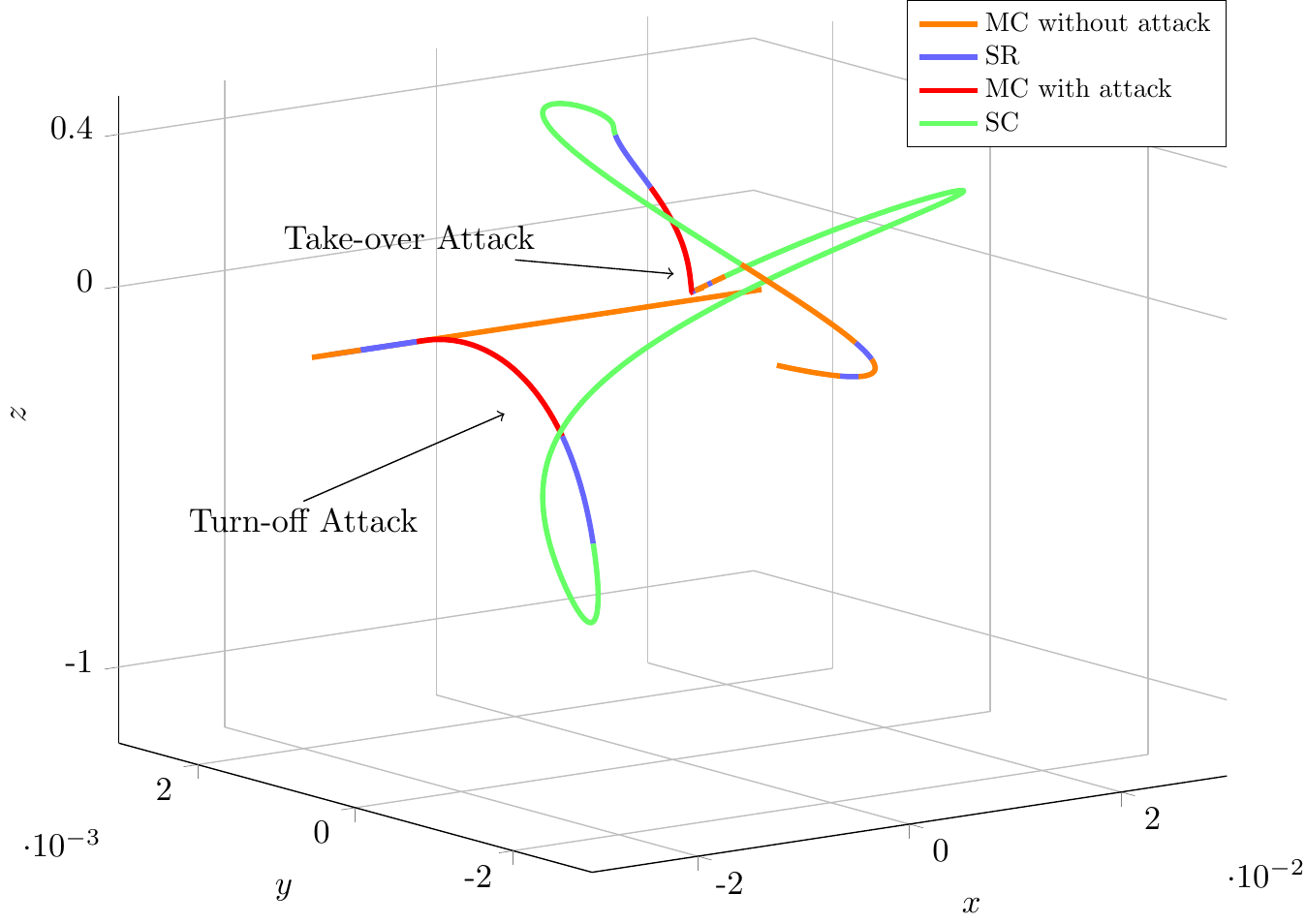}
    \caption{Zoom of the position behaviour during the several modes: MC without attack (orange), MC with attack (red), SR (blue), and SC (green).} 
    \label{fig:sim1}
\end{figure}


\begin{figure}[h!]
\centering
\includegraphics[scale=0.6]{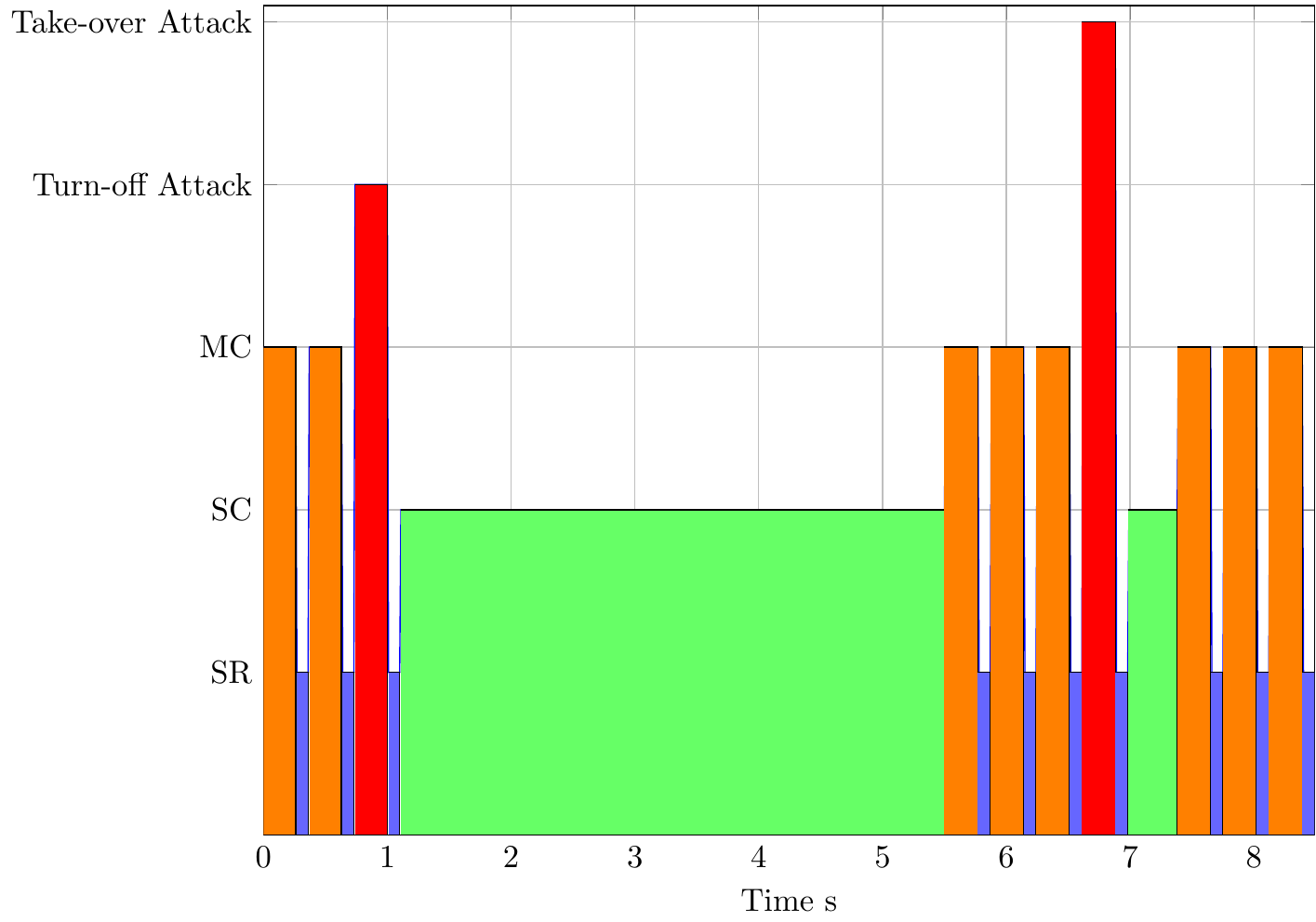} 
\caption{Time scheduling diagram respect the several conditions, $SR$ Software Refresh (blue), $SC$ Safety Control (green), $MC$ mission control (orange), and mission control under attacks: Turn-off Attack and Take-over Attack (red).}
\label{fig:timeline}
\end{figure}

The results show the effectiveness the adopted solution using $T_{UC}=0.37$ s and $\epsilon=0.01$ Fig.\ref{fig:sim}. In Fig.\ref{fig:timeline}the controller is operating in MC (orange), SR (blue), SC (green), and MC under attack (red). It is important to note that, in the case of normal operation during the MC mode, the quadrotor remains in $\mathcal{E}_{\epsilon}$ after SR, so SC is not activated. If the quadrotor is out of $\mathcal{E}_\epsilon$ after SR, SC is activated and this happens in particular after the two attacks. Note that the SC activation does not require the attack detection; the activation strategy is based only on verifying the state condition $x^TPx\leq1$.

\section{DISCUSSION}\label{sec:discussion}
This paper presents conditions for safety controller design and timing parameters to support the implementation of software rejuvenation to provide CPS security against cyber attacks on the run-time code and data. The approach for the case of linear dynamics at an equilibrium state is developed the general results are illustrated using simulation of a quadrotor, demonstrating how safety can guaranteed without having to detect the cyber attacks.  The results in this paper provide the foundation for implementing tracking control, where a supervisory generates the reference signal to the mission control algorithm in a manner that provides a sequence of equilibrium states for safety control. In contrast to the approach developed in \cite{Abdi2018}, the safety controller executes only when it is needed, with an upper bound on the time required to return the system to a safe state, and the reachability set computations to determine the software refresh time are performed offline rather than at run time.  

There are several directions for future research. The safety control time bound and the reset time can be improved using less conservative reachability algorithms. We are currently extending the results to incorporate state estimation, modeling uncertainties and disturbances. We are also developing experimental implementations and investigating methods for managing the equilibrium information and control limits on protected hardware components.   
\section*{Acknowledgments}
\begin{spacing}{0.8}
{\footnotesize 
\noindent Copyright 2018 IEEE. All Rights Reserved. This material is based upon work funded and supported by the Department of Defense under Contract No. FA8702-15-D-0002 with Carnegie Mellon University for the operation of the Software Engineering Institute, a federally funded research and development center. This material has been approved for public release and unlimited distribution.  Please see Copyright notice for non-US Government use and distribution.
Carnegie Mellon® is registered in the U.S. Patent and Trademark Office by Carnegie Mellon University.
}
\end{spacing}
 

\bibliographystyle{unsrt}
\bibliography{ifacconf7.bib} 
\end{document}